\definecolor{myurlcolor}{rgb}{0,0,0.7}
\definecolor{myrefcolor}{rgb}{0.8,0,0}
\newcommand{\gv}[1]{\ensuremath{\text{\boldmath$ #1 $}}}% for vectors of Greek letters
\newcommand{\abs}[1]{\left| #1 \right|} % for absolute value
\newcommand{\norm}[1]{\left\| #1 \right\|} % for norm
\newcommand{\trace}{\mathrm{Tr}}
\newcommand{\vx}{{\gv{x}}}
\newcommand{\vy}{{\gv{y}}}
\newcommand{\bE}{{\mathbb{E}}}
\renewcommand{\epsilon}{\varepsilon}
\newcommand{\appropto}{\mathrel{\vcenter{
  \offinterlineskip\halign{\hfil$##$\cr
    \propto\cr\noalign{\kern2pt}\sim\cr\noalign{\kern-2pt}}}}}
\let\baraccent=\= % rename builtin command \= to \baraccent
\renewcommand{\=}[1]{\stackrel{#1}{=}} % for putting numbers above =
\newcommand{\thmref}[1]{\hyperref[#1]{Theorem~\ref{#1}}}
\newcommand{\lemmaref}[1]{\hyperref[#1]{Lemma~\ref{#1}}}
\newcommand{\figref}[1]{\hyperref[#1]{Fig.~\ref{#1}}}
\newcommand{\figaref}[1]{\hyperref[#1]{Fig.~\ref{#1}a}}
\newcommand{\figbref}[1]{\hyperref[#1]{Fig.~\ref{#1}b}}
\newcommand{\figcref}[1]{\hyperref[#1]{Fig.~\ref{#1}c}}
\renewcommand{\eqref}[1]{\hyperref[#1]{Eq.~(\ref{#1})}}
\newcommand{\eqsref}[2]{\hyperref[#1]{Eqs.~(\ref{#1})-(\ref{#2})}}
\newcommand{\appref}[1]{\hyperref[#1]{Appx.~\ref{#1}} of \cite{SM}}
\newtheorem{theorem}{Theorem}
\newtheorem{lemma}{Lemma}
\begin{document}

\title{
An exact correspondence between the quantum Fisher information and the Bures metric
}

\author{Sisi Zhou}
\affiliation{Departments of Applied Physics and Physics, Yale University, New Haven, Connecticut 06511, USA}
\affiliation{Yale Quantum Institute, Yale University, New Haven, Connecticut 06511, USA}
\affiliation{Pritzker School of Molecular Engineering, The University of Chicago, Illinois 60637, USA}

\author{Liang Jiang}
\affiliation{Departments of Applied Physics and Physics, Yale University, New Haven, Connecticut 06511, USA}
\affiliation{Yale Quantum Institute, Yale University, New Haven, Connecticut 06511, USA}
\affiliation{Pritzker School of Molecular Engineering, The University of Chicago, Illinois 60637, USA}

\date{\today}

\begin{abstract}
The quantum information and the Bures metric are equivalent to each other, except at points where the rank of the density matrix changes. Here we show that by slightly modifying the definition of the Bures metric, the quantum information will be fully equivalent to the Bures metric without exception. 
\end{abstract}

\maketitle

\onecolumngrid

\section{Introduction} 

Quantum Fisher information (QFI) is an important concept in quantum metrology, working as a good measure of the estimation precision of an unknown parameter $x$ in an density matrix $\rho_x$. QFI appears in the famous quantum Cram\'{e}r-Rao bound~\cite{helstrom1976quantum,holevo2011probabilistic,braunstein1994statistical}, 
\begin{equation}
\delta^2 x \geq \frac{1}{N_{\rm expr} \cdot F(\rho_x)},
\end{equation}
where $\delta^2 x = \bE[(\hat{x} - x)^2]$ is the variance of the estimator $\hat{x}$ of an unknown parameter $x$, $N_{\rm expr}$ is the number of experiments (the number of $\rho_x$ used) and $F(\rho_x)$ is the QFI. For multi-parameter estimation, consider $\vx=(x_1\cdots x_P)^T$ where $P$ is the number of parameters. , we also have 
\begin{equation}
\delta^2 \vx \succeq \frac{1}{N_{\rm expr}} \cdot F(\rho_\vx)^{-1},
\end{equation}
where $\delta^2 \vx = \bE[(\hat{\vx} - \vx)(\hat{\vx} - \vx)^T]$ is the covariance matrix and $F(\rho_\vx)$ is the QFI matrix. ``$A \succeq B$'' here means $A - B$ is a positive semidefinite matrix. The quantum Cram\'{e}r-Rao bound is asyptotically saturable ($N \rightarrow \infty$) in the single-parameter case and not always saturable in the multi-parameter case, while the QFI matrix still provides an efficient lower bound of the estimation error. 

In this paper, we consider only $\rho_\vx$ living in finite dimentional Hilbert spaces. Using the diagonal form of the density matrix $\rho_\vx = \sum_{i=1}^d \lambda_i \ket{k}\bra{k}$ where $d$ is the dimension of the Hilbert space, the QFI matrix is defined by 
\begin{equation}
\label{eq:QFI}
F^{ij}(\rho_\vx) = 2 \sum_{k,\ell:\lambda_k + \lambda_\ell > 0} \frac{\mathrm{Re}[\bra{k}\partial_i \rho_\vx \ket{\ell}\bra{\ell} \partial_j \rho_\vx \ket{k}]}{\lambda_k + \lambda_\ell},
\end{equation}
where $i,j = 1,\ldots, P$.

The connection between the QFI and the Bures metric~\cite{hubner1992explicit,sommers2003bures,vsafranek2017discontinuities} was well recognized and widely applied (see e.g.~\cite{escher2011general,zhong2013fisher,yuan2016sequential}). It was believed that 
\begin{equation}
\frac{1}{4}\sum_{i,j=1}^P F^{ij}(\rho_\vx)dx_i dx_j \stackrel{\text{?}}{=} \sum_{i,j=1}^P g^{ij}(\rho_\vx) dx_i dx_j := d_B^2 (\rho_\vx,\rho_{\vx+d\vx})
\end{equation}
where $d_{B}^2 (\rho_1,\rho_2)$ is the Bures distance between $\rho_1$ and $\rho_2$, defined by $d_{B}^2 (\rho_1,\rho_2) = \sqrt{2(1 - F_B(\rho_1,\rho_2))}$ and the fidelity $F_B(\rho_1,\rho_2) = \trace\big(\sqrt{\sqrt{\rho_1}\rho_2\sqrt{\rho_1}}\big)$. However, it was shown that~\cite{vsafranek2017discontinuities} for any second order smooth (the first and second derivatives both exist and are continuous) function $\rho_\vx$,
\begin{equation}
4g^{ij}(\rho_\vx) = F^{ij}(\rho_\vx) + 2\sum_{k:\lambda_k = 0} \partial_{i}\partial_{j}\lambda_k,
\end{equation}
and $F(\rho_\vx) = 4g(\rho_\vx)$ if and only if for all $k$ and $\vx$ such that $\lambda_k = 0$, the Hessian matrices of $\lambda_k$ is zero. Consider $\rho_x = x^2\ket{0}\bra{0} + (1-x^2) \ket{1}\bra{1}$, we can calculate the Bures metric and the QFI at $x = 0$, which gives $F(\rho_x) = 0$ and $g(\rho_x) = 1$. It provides a simple example showing the discrepancy between the QFI and the Bures metric. 

To circumvent this discrepancy, we propose a modified definition of the Bures metric $h^{ij}(\rho_\vx)$,
\begin{equation}
\label{eq:m-bures}
\sum_{i,j=1}^P h^{ij}(\rho_\vx) dx_i dx_j := d_B^2 (\rho_{\vx-\frac{1}{2}d\vx},\rho_{\vx+\frac{1}{2}d\vx})
\end{equation}
and will show that $F(\rho_\vx) = 4h(\rho_\vx)$ for any second order differentiable $\rho_\vx$. Consider again the example where $\rho_x = x^2\ket{0}\bra{0} + (1-x^2) \ket{1}\bra{1}$, we can easily verify that $h(\rho_x) = 0$ and $F(\rho_x) = 4h(\rho_x)$. We will prove that such a correspondence between $F(\rho_\vx)$ and $h(\rho_\vx)$ is general. 

{
Note that here our discussion is based on the assumption that $\rho_\vx$ is well defined in the neighborhood of $\vx$. However, one should be careful with situations where $\vx$ is in the boundary of a closed set in $\mathbb{R}^N$, for example, $\rho_x = x^2\ket{0}\bra{0} + (1-x^2) \ket{1}\bra{1}$ defined on $x \in[0,1]$. In this case, \eqref{eq:QFI} is ill-defined at $x=0$ and $F(\rho_0) = 0$ does not capture the estimation precision of $x$. 
}

\section{Results}

In this section, we will provide a proof of the following theorem:
\begin{theorem}
\label{thm:main}
$F(\rho_\vx) = 4h(\rho_\vx)$ for any second order differentiable $\rho_\vx$. 
\end{theorem}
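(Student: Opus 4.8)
\emph{Strategy.} Fix a point $\vx$ where $\rho_\vx$ is twice continuously differentiable on a neighbourhood, and abbreviate $\rho:=\rho_\vx$, $F^{ij}:=F^{ij}(\rho_\vx)$, with $\partial_i\rho,\partial_i\partial_j\rho$ the derivatives at $\vx$. Let $S$ be the support of $\rho$ and $K=\ker\rho$, with projectors $P_S,P_K$ and $\rho=\sum_{k\in S}\lambda_k\ket{k}\bra{k}$, and put $\sigma_\pm:=\rho_{\vx\pm\frac12 d\vx}$. Taylor expansion gives $\sigma_\pm=\rho\pm G+H+o(|d\vx|^2)$ with $G=\tfrac12\sum_i dx_i\,\partial_i\rho$ of order $|d\vx|$ and $H=\tfrac18\sum_{ij}dx_idx_j\,\partial_i\partial_j\rho$ of order $|d\vx|^2$; in particular $\sigma_+-\sigma_-=d\rho+o(|d\vx|^2)$ with $d\rho:=\sum_i dx_i\,\partial_i\rho$. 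The plan is to prove directly
\begin{equation}
\label{eq:target}
d_B^2(\sigma_-,\sigma_+)=\frac12\sum_{k,\ell:\,\lambda_k+\lambda_\ell>0}\frac{|\bra{k}d\rho\ket{\ell}|^2}{\lambda_k+\lambda_\ell}+o(|d\vx|^2)=\frac14\sum_{ij}F^{ij}dx_idx_j+o(|d\vx|^2),
\end{equation}
which is $F(\rho_\vx)=4h(\rho_\vx)$.

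\emph{Two structural inputs.} Since $\rho_\vx\succeq 0$ throughout a neighbourhood, the principal submatrix $P_K\rho_\vx P_K$ is positive semidefinite and vanishes at $\vx$, so $\vx$ minimises it and its first derivative vanishes: $P_K(\partial_i\rho)P_K=0$ for all $i$, hence the $KK$-block of $d\rho$ (and of $G$) is zero. Two consequences. (a) The kernel--kernel terms missing from \eqref{eq:target} would vanish anyway, as $\bra{k}d\rho\ket{\ell}=0$ for $k,\ell\in K$. (b) The $KK$-block of $\sigma_\pm$ is $P_KHP_K+o(|d\vx|^2)$, and the $\dim K$ smallest eigenvalues of $\sigma_\pm$---the eigenvalues of the Schur complement $P_K\sigma_\pm P_K-(P_K\sigma_\pm P_S)(P_S\sigma_\pm P_S)^{-1}(P_S\sigma_\pm P_K)=P_KHP_K-P_KG(P_S\rho P_S)^{-1}GP_K+o(|d\vx|^2)$, the inverse taken on $\mH_S$---coincide for $\sigma_+$ and $\sigma_-$ up to $o(|d\vx|^2)$, since both $H$ and $G(P_S\rho P_S)^{-1}G$ are even in $d\vx$. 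Input (b) is exactly what distinguishes the symmetric difference from the one-sided one of \cite{vsafranek2017discontinuities}, where one endpoint is $\rho$ itself, with vanishing eigenvalues exactly $0$ rather than $O(|d\vx|^2)$; that mismatch is what generates the extra term $2\sum_{k:\lambda_k=0}\partial_i\partial_j\lambda_k$ there.

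\emph{Main computation.} I would evaluate $d_B^2(\sigma_-,\sigma_+)=2-2\,\trace\sqrt{\sqrt{\sigma_-}\,\sigma_+\sqrt{\sigma_-}}=\min_U\|\sqrt{\sigma_-}-\sqrt{\sigma_+}U\|_2^2$ (Uhlmann, $U$ unitary) block by block in $\mH=\mH_S\oplus\mH_K$. On $\mH_S$ the square roots are smooth, $\sqrt{\sigma_\pm}=\sqrt\rho\pm R_1+O(|d\vx|^2)$ with $R_1=R_1^\dagger$ solving $R_1\sqrt\rho+\sqrt\rho R_1=G$, so $\bra{k}R_1\ket{\ell}=\bra{k}G\ket{\ell}/(\sqrt{\lambda_k}+\sqrt{\lambda_\ell})$ whenever $\sqrt{\lambda_k}+\sqrt{\lambda_\ell}>0$; on the $SK$-block $\sqrt{\sigma_\pm}=\pm R_1+o(|d\vx|)$; and on the $KK$-block $\sqrt{\sigma_\pm}=\sqrt{M_\pm}$, so by (b) the $KK$-block of $\sqrt{\sigma_+}-\sqrt{\sigma_-}$ is only $o(|d\vx|)$. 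Taking $U=\id+iK+O(|d\vx|^2)$ with $K=K^\dagger$ of order $|d\vx|$, one has $\sqrt{\sigma_-}-\sqrt{\sigma_+}U=-2R_1-i\sqrt\rho K+o(|d\vx|)$, and minimising $\|2R_1+i\sqrt\rho K\|_2^2$ over $K$ yields the right side of \eqref{eq:target}: the minimisation replaces the denominators $(\sqrt{\lambda_k}+\sqrt{\lambda_\ell})^2$ of the trivial choice $U=\id$ by $\lambda_k+\lambda_\ell$, the $SS$- and $SK$-sectors reproducing $\tfrac12\sum_{k,\ell\in S}|\bra{k}d\rho\ket{\ell}|^2/(\lambda_k+\lambda_\ell)$ and $\sum_{k\in S,\ell\in K}|\bra{k}d\rho\ket{\ell}|^2/\lambda_k$, and the $KK$-sector contributing only $o(|d\vx|^2)$ because $\sqrt{\sigma_+}-\sqrt{\sigma_-}$ is already $o(|d\vx|)$ there---so no anomalous term remains. (One inequality is cheap: monotonicity of fidelity under the depolarising channel $\mE_\varepsilon$ gives $d_B^2(\mE_\varepsilon\sigma_-,\mE_\varepsilon\sigma_+)\le d_B^2(\sigma_-,\sigma_+)$; since $\rho_\vx^{(\varepsilon)}:=\mE_\varepsilon(\rho_\vx)$ is full rank one has $4h(\rho_\vx^{(\varepsilon)})=F(\rho_\vx^{(\varepsilon)})$ trivially---there $d_B^2(\rho_\va^{(\varepsilon)},\rho_\vb^{(\varepsilon)})$ is a smooth quadratic form in $\vb-\va$ and the quoted formula reads $4g=F$---and $F(\rho_\vx^{(\varepsilon)})\to F(\rho_\vx)$ as $\varepsilon\to0$, again because $P_K(\partial_i\rho)P_K=0$ suppresses the would-be discontinuous kernel contribution; hence $F(\rho_\vx)\preceq 4h(\rho_\vx)$.)

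\emph{Main obstacle.} The technical heart is the kernel sector: estimating $\min_U\|\sqrt{\sigma_-}-\sqrt{\sigma_+}U\|_2^2$ to order $|d\vx|^2$ when $\sqrt{\sigma_\pm}$ has eigenvalues of order $|d\vx|$ in the kernel directions, so the square-root map is not differentiable there and ordinary perturbation theory fails. The resolution is to build the whole expansion around input (b): $\sigma_+$ and $\sigma_-$ restrict to the same subnormalised state on $\mH_K$ up to $o(|d\vx|^2)$, so the optimal $U$ is $\id$ plus an $O(|d\vx|)$ generator whose only essential part mixes $\mH_S$ and $\mH_K$, the kernel block of $\sqrt{\sigma_+}-\sqrt{\sigma_-}$ being already negligible. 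Carrying this through while keeping every remainder genuinely $o(|d\vx|^2)$ under the sole hypothesis that $\rho_\vx$ is $C^2$ (not analytic) is where the care goes; the support--support and support--kernel parts are then a routine second-order calculation.
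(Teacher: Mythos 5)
Your strategy is sound and reaches the same two structural facts the paper relies on, but by a genuinely different route. You compute $d_B^2(\sigma_-,\sigma_+)$ via Uhlmann's variational formula $\min_U\|\sqrt{\sigma_-}-\sqrt{\sigma_+}U\|_2^2$, getting the upper bound $4h\le F$ from an explicit near-identity $U$ (your blockwise minimisation over $K$ is algebraically correct: it converts the denominators $(\sqrt{\lambda_k}+\sqrt{\lambda_\ell})^2$ into $\lambda_k+\lambda_\ell$ and reproduces $\tfrac14\sum_{ij}F^{ij}dx_idx_j$), and the lower bound $F\preceq 4h$ from monotonicity of fidelity under depolarisation plus continuity of the QFI under smoothing, which indeed hinges on $P_K(\partial_i\rho)P_K=0$. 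The paper instead expands $\trace\sqrt{\sqrt{\rho(\epsilon)}\rho(-\epsilon)\sqrt{\rho(\epsilon)}}$ directly: it block-diagonalises with a rotation $e^{i\epsilon G}$ ($G_{12}=-i\Lambda_+^{-1}R_{12}$), writes an explicit ansatz for $\sqrt{\Lambda(\epsilon)}$, and controls the trace of the square root of the resulting nearly block-diagonal matrix with two lemmas (Mirsky's inequality for unitarily invariant norms, and a Taylor expansion of the matrix square root on the positive block). Your inputs (a) and (b) are exactly the paper's $R_{22}=0$ and the evenness of $T_{22}=S_{22}-R_{21}\Lambda_+^{-1}R_{12}$ in the sign of $\epsilon$; this evenness is indeed the mechanism that kills \v{S}afr\'anek's anomalous term in the symmetric difference. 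Your route buys two things: the lower bound comes almost for free from data processing, and since the leading term of $\sqrt{\sigma_-}-\sqrt{\sigma_+}U$ is already $O(|d\vx|)$, second-order errors in the square roots only enter through $O(|d\vx|^3)$ cross terms, so you need only first-order blockwise control of $\sqrt{\sigma_\pm}$, whereas the paper must track the trace to genuine second order. The price is that this first-order blockwise control near the kernel (the $SS$, $SK$ expansions and, crucially, that the $KK$ blocks of $\sqrt{\sigma_\pm}$ agree to $o(|d\vx|)$ --- eigenvalue agreement alone does not give this; you need the approximate decoupling of the two subspaces, not just the spectrum of the Schur complement) is not supplied in your sketch, and it cannot be obtained by naive differentiation since the square root is not differentiable at rank-deficient $\rho$; note also that a generic bound such as Powers--St{\o}rmer only gives $\|\sqrt{\sigma_+}-\sqrt{\sigma_-}\|_2=O(|d\vx|^{1/2})$, which is too weak. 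This is precisely the role played in the paper by the rotation trick together with Lemma 3 (via Mirsky) and the explicit form of $\sqrt{\Lambda(\epsilon)}$; importing that lemma, or proving an analogous statement, is what your ``main obstacle'' paragraph still owes, after which your argument closes.
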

Before going into the details of the proof, we first state three useful lemmas and provide a proof of Lemma 3.
\begin{lemma}[Distance between two matrices~\cite{mirsky1960symmetric}]
\label{lemma:dist}
Let $\alpha_1 \geq \alpha_2 \geq \cdots \geq \alpha_n$ and $\beta_1 \geq \beta_2 \geq \cdots \geq \beta_n$ be the singular values of the complex matrices $M_1$ and $M_2$ respectively. Then 
\begin{equation}
\norm{M_1 - M_2} \geq \norm{{\rm diag}(\alpha_1-\beta_1,\ldots,\alpha_n-\beta_n)},
\end{equation}
for any unitarily invariant norm $\norm{\cdot}$. 
\end{lemma}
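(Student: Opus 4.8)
The plan is to peel the statement down to a weak-majorization inequality between singular values and then feed it into the standard Hermitian eigenvalue perturbation machinery. Write $\sigma_i(\cdot)$ for singular values in nonincreasing order and note that the singular values of $\mathrm{diag}(\alpha_1-\beta_1,\ldots,\alpha_n-\beta_n)$ are exactly the numbers $|\alpha_i-\beta_i|$. By von Neumann's theorem every unitarily invariant norm has the form $\norm{X}=\Phi(\sigma_1(X),\ldots,\sigma_n(X))$ for a symmetric gauge function $\Phi$, and by the Ky Fan dominance theorem the inequality $\norm{Y}\le\norm{X}$ holds for \emph{all} such norms at once precisely when the singular values are weakly majorized. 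Hence it suffices to establish, for every $k=1,\ldots,n$,
\begin{equation}
\sum_{i=1}^k |\alpha-\beta|_{[i]} \;\le\; \sum_{i=1}^k \sigma_i(M_1-M_2),
\end{equation}
where $|\alpha-\beta|_{[i]}$ denotes the $i$-th largest of the $|\alpha_j-\beta_j|$.

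The second step converts singular values into Hermitian eigenvalues by dilation. For any complex matrix $M$ set
\begin{equation}
\widetilde{M}=\begin{pmatrix} 0 & M \\ M^\dagger & 0\end{pmatrix},
\end{equation}
a $2n\times 2n$ Hermitian matrix whose eigenvalues are exactly $\{\pm\sigma_i(M)\}$. The key point is that the dilation is linear, so $\widetilde{M_1}-\widetilde{M_2}=\widetilde{M_1-M_2}$, whose eigenvalues are $\{\pm\sigma_i(M_1-M_2)\}$. The nonincreasingly ordered eigenvalue vectors of $\widetilde{M_1}$ and $\widetilde{M_2}$ are $(\alpha_1,\ldots,\alpha_n,-\alpha_n,\ldots,-\alpha_1)$ and $(\beta_1,\ldots,\beta_n,-\beta_n,\ldots,-\beta_1)$, so their index-wise difference is the $2n$-vector with entries $\alpha_i-\beta_i$ together with $-(\alpha_i-\beta_i)$.

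The third step applies the Lidskii--Wielandt theorem to the Hermitian pair $A=\widetilde{M_1}$, $B=\widetilde{M_2}$. With $C=A-B$, Lidskii's inequality yields the majorization $\lambda^\downarrow(A)-\lambda^\downarrow(B)\prec\lambda^\downarrow(A-B)$, meaning the top-$k$ partial sum of the left vector is dominated by that of $\lambda^\downarrow(\widetilde{M_1-M_2})$ for every $k\le 2n$. Restrict to $k\le n$: since the entries of $\lambda^\downarrow(A)-\lambda^\downarrow(B)$ come in pairs $\pm(\alpha_i-\beta_i)$, the $k$ largest among them are precisely the $k$ largest values $|\alpha_i-\beta_i|$, while the $k$ largest entries of $\lambda^\downarrow(\widetilde{M_1-M_2})$ are $\sigma_1(M_1-M_2),\ldots,\sigma_k(M_1-M_2)$. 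This is exactly the displayed weak-majorization inequality, and the lemma follows.

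The main obstacle is the Lidskii--Wielandt majorization itself; the two reductions above are bookkeeping resting on the dilation identity. A self-contained derivation of Lidskii's inequality needs either Wielandt's minimax formula for partial eigenvalue sums $\sum_j \lambda_{i_j}(A)$ over flags of subspaces, or an antisymmetric-tensor-power argument that turns the $k$-fold eigenvalue sum into a largest-eigenvalue statement on $\Lambda^k$; either route carries the genuine analytic content and would dominate a complete write-up. If instead the Hermitian Lidskii theorem is taken as known, as in standard matrix-analysis references, the proof collapses to the routine reductions sketched here.
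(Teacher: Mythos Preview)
The paper does not prove this lemma at all; it simply states it with a citation to Mirsky's 1960 paper and uses it as a black box, so there is no in-paper proof to compare against. Your argument is correct and is essentially the standard modern route to Mirsky's inequality: reduce via the Ky Fan dominance principle to a weak-majorization statement on singular values, linearize by passing to the Hermitian dilation $\widetilde{M}=\begin{pmatrix}0&M\\M^\dagger&0\end{pmatrix}$, and then invoke the Lidskii--Wielandt majorization $\lambda^\downarrow(A)-\lambda^\downarrow(B)\prec\lambda(A-B)$. The bookkeeping step you flag---that for $k\le n$ the $k$ largest entries of the $\pm(\alpha_i-\beta_i)$ vector are exactly the $k$ largest $|\alpha_i-\beta_i|$, and likewise on the right-hand side---is sound. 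Your closing caveat is also accurate: the genuine analytic content sits entirely in Lidskii's theorem, and if that is quoted the rest is routine.
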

\begin{lemma}[Taylor expansion of the matrix square root function \cite{del2018taylor}]
\label{lemma:taylor}
If $A \succ 0$ and $A + H\succ 0$ (``$M_1 \succ M_2$'' here means $M_1 - M_2$ is a positive matrix), then we have the Taylor expansion of $\sqrt{A+H}$ up to the $n$-th order,
\begin{equation}
\sqrt{A + H} = \sqrt{A} + \sum_{1\leq k \leq n} \frac{1}{k!}\nabla^k(\sqrt{A}) \cdot H + O(\norm{H}_2^{n+1})
\end{equation}
where $\norm{\cdot}_2$ is the $L_2$ norm, $\nabla^n(\sqrt{A})$ is the $n$-th order derivative operator at $A$ defined by 
\begin{equation}
\nabla(\sqrt{A}) \cdot H = \int_0^\infty e^{-s\sqrt{A}} H e^{-s\sqrt{A}} ds,
\end{equation}
when $n=1$ and for $n\geq 2$,
\begin{equation}
\nabla^n(\sqrt{A}) \cdot H = - \nabla(\sqrt{A}) \cdot \bigg[ \sum_{\substack{p+q=n-2,\\p,q\geq 0}} \frac{n!}{(p+1)!(q+1)!} \big[\nabla^{p+1}(\sqrt{A}) \cdot H\big] \big[\nabla^{q+1}(\sqrt{A}) \cdot H\big] \bigg].
\end{equation}
\end{lemma}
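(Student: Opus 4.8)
The plan is to pin down the Taylor coefficients by repeatedly differentiating the defining identity $Y^2 = A + H$, writing $Y := \sqrt{A+H}$ and $X := \sqrt{A}$. Since $A \succ 0$ and $A+H \succ 0$ are assumed, the matrix square root is $C^\infty$ (in fact real-analytic) near $A$: the map $Z \mapsto Z^2$ has Fréchet derivative $W \mapsto XW + WX$ at $Z = X$, and this linear map is invertible because, in the spectral decomposition $X = \sum_i \mu_i P_i$ with all $\mu_i > 0$, it multiplies the block $P_i W P_j$ by $\mu_i + \mu_j > 0$; the inverse function theorem then gives local smoothness of $\sqrt{\cdot}$, hence a genuine order-$n$ Taylor expansion $Y = \sum_{k=0}^{n} Y_k + O(\norm{H}_2^{n+1})$ with $Y_0 = X$ and $Y_k$ homogeneous of degree $k$ in $H$. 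It also shows that the ``inverse Sylvester'' operator $W \mapsto \int_0^\infty e^{-sX} W e^{-sX}\,ds$ is well defined (the integrand decays like $e^{-2s\mu_{\min}}$) and is precisely the inverse of $W \mapsto XW + WX$, since $-\tfrac{d}{ds}\big(e^{-sX}We^{-sX}\big)$ integrates to $W$ by telescoping.

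Substituting the expansion into $Y^2 = X^2 + H$ and collecting homogeneous terms of each degree in $H$, the first order reads $X Y_1 + Y_1 X = H$, whose unique solution is $Y_1 = \int_0^\infty e^{-sX} H e^{-sX}\,ds$; this identifies $\nabla(\sqrt{A}) \cdot H = Y_1$ as claimed. For $n \geq 2$ the degree-$n$ identity is $X Y_n + Y_n X + \sum_{i+j=n,\ i,j\geq 1} Y_i Y_j = 0$, hence
\begin{equation}
Y_n = -\,\nabla(\sqrt{A}) \cdot \Big[ \sum_{\substack{i+j=n,\\ i,j\geq 1}} Y_i Y_j \Big].
\end{equation}
Now induct on $n$: assuming $Y_k = \tfrac{1}{k!}\nabla^k(\sqrt{A}) \cdot H$ for $k < n$, substitute $Y_i Y_j = \tfrac{1}{i!\,j!}\big[\nabla^i(\sqrt{A})\cdot H\big]\big[\nabla^j(\sqrt{A})\cdot H\big]$, reindex with $p = i-1$, $q = j-1$ so that $p+q = n-2$ and $p,q\geq 0$, and multiply through by $n!$; the right-hand side becomes exactly the expression defining $\nabla^n(\sqrt{A})\cdot H$ in the lemma, while the left-hand side is $n!\,Y_n = \nabla^n(\sqrt{A})\cdot H$ by definition of the degree-$n$ Taylor term. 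This closes the induction, and the remainder $O(\norm{H}_2^{n+1})$ is supplied by Taylor's theorem with the smoothness from the first step (uniformity coming from $A, A+H \succ 0$).

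I expect the one genuinely non-formal step to be the smoothness/analyticity claim of the first paragraph — producing an actual Taylor remainder rather than a merely formal series — since everything afterwards is forced algebra and bookkeeping. A route that avoids the inverse function theorem is to start from the resolvent representation $\sqrt{A} = \tfrac{2}{\pi}\int_0^\infty A\,(A + s^2)^{-1}\,ds$, expand $(A + H + s^2)^{-1}$ by the Neumann series in $H$, and integrate term by term; this yields the expansion and an explicit integral remainder directly, at the price of a more tedious matching of its coefficients to the recursive form above.
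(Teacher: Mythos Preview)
The paper does not prove this lemma at all: it is quoted verbatim from the cited reference \cite{del2018taylor} and then used as a black box in the proof of Theorem~2. So there is no ``paper's own proof'' to compare against.

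Your argument is correct and self-contained. The key points --- invertibility of the Sylvester map $W\mapsto XW+WX$ when $X=\sqrt{A}\succ 0$, its explicit inverse $\int_0^\infty e^{-sX}We^{-sX}\,ds$, smoothness of $\sqrt{\cdot}$ via the inverse function theorem, and the recursive identification of the Taylor coefficients by matching homogeneous degrees in $Y^2=X^2+H$ --- are all sound, and the induction step with the reindexing $p=i-1$, $q=j-1$ reproduces the stated formula for $\nabla^n(\sqrt{A})\cdot H$ exactly. The only point you flag as non-formal (existence of a true Taylor remainder rather than a formal power series) is indeed the substantive analytic input, and the inverse function theorem handles it; your alternative resolvent route would also work. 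In short, you have supplied a proof the paper chose to outsource.
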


\begin{lemma}
\label{lemma:singular}
Consider a positive semidefinite matrix 
\begin{equation}
M(\delta) = 
\begin{pmatrix}
A & \delta B\\
\delta B^\dagger  & \delta^2 C
\end{pmatrix} + \begin{pmatrix}
o(\delta) & o(\delta)\\
o(\delta)  & o(\delta^2)
\end{pmatrix},
\end{equation}
as a function defined over $\delta \in [0,a)$ for some $a > 0$. $A,B,C = O(1)$ are matrices satisfying
\begin{equation}
A \succ 0, \quad 
C - B^\dagger A^{-1} B \succeq 0.
\end{equation}
Then 
\begin{equation}
\trace\left(\sqrt{M(\delta)}\right) = \trace\big(\sqrt{A}\big) + \delta \trace\big(\sqrt{C - B^\dagger A^{-1} B}\big) + o(\delta).
\end{equation}
\end{lemma}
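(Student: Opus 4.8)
The plan is to split the spectrum of $M(\delta)$ into an ``upper'' group of $n_1=\dim A$ eigenvalues that remain near $\mathrm{spec}(A)$ and a ``lower'' group of $n_2=\dim C$ eigenvalues that collapse to $0$ at rate $\delta^2$, and to compute the contribution of each group to $\trace\sqrt{M(\delta)}=\sum_i\sqrt{\lambda_i(M(\delta))}$ (legitimate since $M(\delta)\succeq0$). Write $M(\delta)=\left(\begin{smallmatrix}A(\delta)&X(\delta)\\ X(\delta)^\dagger&Y(\delta)\end{smallmatrix}\right)$ with $A(\delta)=A+o(\delta)\succ0$, $X(\delta)=\delta B+o(\delta)$, $Y(\delta)=\delta^2C+o(\delta^2)$. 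Applying \lemmaref{lemma:dist} with the operator norm (for which it is Weyl's inequality $\max_i|\lambda_i(M_1)-\lambda_i(M_2)|\le\|M_1-M_2\|$) to $M(\delta)$ and $\mathrm{diag}(A(\delta),0)$ shows that for all sufficiently small $\delta$ the $n_1$ largest eigenvalues of $M(\delta)$ lie within $O(\delta)$ of $\mathrm{spec}(A)$ while the $n_2$ smallest lie in $[0,O(\delta))$; since $A\succ0$ this produces a fixed spectral gap, so the two groups can be enclosed by disjoint contours and treated independently.

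For the upper group I would represent its contribution as $\trace\!\big(\sqrt{M(\delta)}\,P(\delta)\big)=\frac{1}{2\pi i}\oint_\Gamma \sqrt{z}\,\trace\!\big((zI-M(\delta))^{-1}\big)\,dz$, where $P(\delta)$ is the spectral projector onto the upper eigenvalues and $\Gamma$ is a contour around a small neighborhood of $\mathrm{spec}(A)$ that excludes the origin, so that $\sqrt{\cdot}$ (principal branch) is holomorphic on $\Gamma$ and its interior. Expanding $(zI-M(\delta))^{-1}$ in blocks via the Schur complement, the $(1,1)$ block is $(zI-A(\delta))^{-1}+O(\delta^2)$ and the $(2,2)$ block is $z^{-1}I+O(\delta^2)$ uniformly on $\Gamma$ (because $X(\delta)=O(\delta)$, $Y(\delta)=O(\delta^2)$, and $z$ is bounded away from $\mathrm{spec}(A)$ and from $0$ on $\Gamma$), so $\trace\!\big((zI-M(\delta))^{-1}\big)=\trace\!\big((zI-A(\delta))^{-1}\big)+n_2 z^{-1}+O(\delta^2)$. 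The decisive point is that $\oint_\Gamma \sqrt{z}\,z^{-1}\,dz=\oint_\Gamma z^{-1/2}\,dz=0$ because $z^{-1/2}$ is holomorphic inside $\Gamma$; hence the upper contribution equals $\trace\sqrt{A(\delta)}+O(\delta^2)=\trace\sqrt{A}+o(\delta)$, the last step by continuity of the matrix square root ($\|\sqrt{P}-\sqrt{Q}\|\le\|P-Q\|^{1/2}$ for $P,Q\succeq0$) together with $A(\delta)=A+o(\delta)$.

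For the lower group I would use that, for $\lambda$ small enough that $A(\delta)-\lambda I\succ0$, $\lambda$ is an eigenvalue of $M(\delta)$ iff $\det\!\big(G(\lambda,\delta)\big)=0$ where $G(\lambda,\delta)=Y(\delta)-\lambda I-X(\delta)^\dagger(A(\delta)-\lambda I)^{-1}X(\delta)$ is the $n_2\times n_2$ Schur complement. Substituting $\lambda=\delta^2\mu$ and the expansions above gives $G(\delta^2\mu,\delta)=\delta^2\big(C-B^\dagger A^{-1}B-\mu I\big)+o(\delta^2)$ uniformly for bounded $\mu$, so the $n_2$ small solutions $\mu=\lambda_{n_1+j}(M(\delta))/\delta^2$ converge, by continuity of eigenvalues, to the eigenvalues $\mu_j$ of $C-B^\dagger A^{-1}B$; the hypothesis $C-B^\dagger A^{-1}B\succeq0$ (together with $M(\delta)\succeq0$) makes these limits nonnegative. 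The lower contribution is therefore $\sum_j\delta\sqrt{\mu_j+o(1)}=\delta\,\trace\sqrt{C-B^\dagger A^{-1}B}+o(\delta)$, and adding the two contributions proves the lemma.

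The main obstacle is the upper group: crude perturbation bounds (Weyl, or congruence via Ostrowski) only control each upper eigenvalue up to $O(\delta)$, which is insufficient since it would leave an $O(\delta)$—rather than $o(\delta)$—error in $\trace\sqrt{M(\delta)}$; the genuine content is the second-order cancellation, which the contour representation delivers transparently through the vanishing of $\oint_\Gamma z^{-1/2}\,dz$. The remaining work—propagating the $o(\delta)$ and $o(\delta^2)$ error blocks through the Schur complements and the resolvent expansion without loss of order, uniformly on $\Gamma$ and for bounded $\mu$—is routine but must be done with care. (A congruence of $M(\delta)$ to block-diagonal form together with operator monotonicity of $\sqrt{\cdot}$ yields the claimed upper bound quickly, but the matching lower bound is problematic because the positive correction one must then subtract can dominate the $O(\delta^2)$ lower block; I would therefore favor the resolvent argument.)
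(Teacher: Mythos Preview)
Your approach is correct and genuinely different from the paper's. The paper block-diagonalizes $M(\delta)$ by an explicit unitary $e^{i\delta G}$ (a Schrieffer--Wolff type rotation with $G_{12}=-iA^{-1}B$, $G_{21}=iB^\dagger A^{-1}$), reducing it to $\mathrm{diag}\big(A,\delta^2(C-B^\dagger A^{-1}B)\big)$ plus lower-order off-diagonal errors, and then invokes \lemmaref{lemma:dist} (Mirsky) to pass to $\trace\sqrt{\cdot}$. You instead separate the spectrum analytically: the contour integral handles the large eigenvalues, with the cancellation $\oint_\Gamma z^{-1/2}\,dz=0$ killing the spurious $n_2 z^{-1}$ term, while a Schur-complement determinant reduction handles the small ones. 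The paper's argument is more elementary (no complex analysis) but depends on guessing the right $G$; yours is more systematic and makes the second-order cancellation in the upper group completely transparent.

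One slip to fix: the H\"older bound $\|\sqrt{P}-\sqrt{Q}\|\le\|P-Q\|^{1/2}$ that you cite only gives $\trace\sqrt{A(\delta)}=\trace\sqrt{A}+o(\delta^{1/2})$, which would swamp the $\delta$ term you are after. What you actually need is that $\sqrt{\cdot}$ is differentiable at $A\succ0$ (this is precisely \lemmaref{lemma:taylor}), so that $\sqrt{A(\delta)}=\sqrt{A}+O(\|A(\delta)-A\|)=\sqrt{A}+o(\delta)$. A smaller point: your lower-group argument tacitly assumes the $n_2$ small eigenvalues are $O(\delta^2)$, so that $\mu=\lambda/\delta^2$ stays bounded and the uniform expansion of $G(\delta^2\mu,\delta)$ applies; this follows, for instance, from the congruence $M(\delta)=L\,\mathrm{diag}\big(A(\delta),\,Y(\delta)-X(\delta)^\dagger A(\delta)^{-1}X(\delta)\big)L^\dagger$ with $L=I+O(\delta)$ together with Ostrowski's inertia bound.
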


\begin{proof}
Let $G$ be a Hermitian matrix
\begin{equation}
G = \begin{pmatrix}
0 & G_{12} \\
G_{21} & 0 \\
\end{pmatrix}.
\end{equation}
Then 
\begin{equation}
e^{i\delta G}
\begin{pmatrix}
A & \delta B\\
\delta B^\dagger  & \delta^2 C
\end{pmatrix}
 e^{-i\delta G}
= \begin{pmatrix}
A  & 0 \\
0  & \delta^2 (C - B^\dagger A^{-1} B)
\end{pmatrix}
+ \begin{pmatrix}
o(\delta) & o(\delta)\\
o(\delta)  & o(\delta^2)
\end{pmatrix},
\end{equation}
\begin{equation}
\Leftrightarrow~ 
\begin{pmatrix}
A & \delta B\\
\delta B^\dagger  & \delta^2 C
\end{pmatrix}
= \left(I - i\delta G - \frac{\delta^2}{2} G^2\right) \begin{pmatrix}
A & 0 \\
0 & \delta^2 (C - B^\dagger A^{-1} B) 
\end{pmatrix} \left(I + i\delta G - \frac{\delta^2}{2} G^2\right)
+ \begin{pmatrix}
o(\delta) & o(\delta)\\
o(\delta)  & o(\delta^2)
\end{pmatrix},
\end{equation}
\begin{equation}
\Leftrightarrow~ 
\begin{pmatrix}
0 & B \\
B^\dagger & 0
\end{pmatrix}
= 
-i\delta \begin{pmatrix}
0 & - A G_{12} \\
G_{21} A & 0 
\end{pmatrix},\quad
C = 
C - B^\dagger A^{-1} B + G_{21} A G_{12},
\end{equation}
\begin{equation}
\Leftrightarrow~
 G_{12} = -i A^{-1}B,\quad
 G_{21} = i B^\dagger A^{-1}.
\end{equation}
Using the same technique, it is easy to show that there exists a Hermitian matrix $G' = o(\delta)$ such that

% \begin{equation}
% e^{i G'}
% \left(
% \begin{pmatrix}
% A+o(\delta)  & X + o(\delta^2) \\
% X^\dagger + o(\delta^2)  & \delta^2 (C - B^\dagger A^{-1} B)+o(\delta^2)
% \end{pmatrix}
% \right)
%  e^{-i G'}
% = \begin{pmatrix}
% A+o(\delta)  & o(\delta^2) \\
% o(\delta^2)  & \delta^2 (C - B^\dagger A^{-1} B)+o(\delta^2)
% \end{pmatrix}
% ,
% \end{equation}
% \begin{equation}
% \Leftrightarrow~ 
% \begin{pmatrix}
% A & X \\
% X^\dagger  & \delta^2 (C - B^\dagger A^{-1} B)
% \end{pmatrix}
% = \left(I - i G' - \frac{1}{2} G'^2\right) \begin{pmatrix}
% A & 0\\
% 0  & \delta^2 (C - B^\dagger A^{-1} B)
% \end{pmatrix} \left(I + i G - \frac{1}{2} G'^2\right)
% +
% \begin{pmatrix}
% o(\delta) & o(\delta^2)\\
% o(\delta^2)  & o(\delta^2)
% \end{pmatrix} 
% ,
% \end{equation}
% \begin{equation}
% \Leftrightarrow~ 
% \begin{pmatrix}
% 0 & X \\
% X^\dagger & 0
% \end{pmatrix}
% = 
% -i \begin{pmatrix}
% 0 & - A G'_{12} \\
% G'_{21} A & 0 
% \end{pmatrix},
% \end{equation}

\begin{equation}
e^{i G'} e^{i\delta G}
\begin{pmatrix}
A & \delta B\\
\delta B^\dagger  & \delta^2 C
\end{pmatrix}
 e^{-i\delta G} e^{-i G'}
= \begin{pmatrix}
A  & 0 \\
0  & \delta^2 (C - B^\dagger A^{-1} B)
\end{pmatrix}
+ \begin{pmatrix}
o(\delta) & o(\delta^2) \\
o(\delta^2) & o(\delta^2)
\end{pmatrix},
\end{equation}
Therefore we have
\begin{equation}
\begin{split}
\trace\left(\sqrt{\begin{pmatrix}
A & \delta B\\
\delta B^\dagger  & \delta^2 C
\end{pmatrix}}\right) 
&= \trace\left(\sqrt{\begin{pmatrix}
A + o(\delta)  & 0 \\
0  & \delta^2 (C - B^\dagger A^{-1} B) 
\end{pmatrix}
+ \begin{pmatrix}
0 & o(\delta^2) \\
o(\delta^2) & o(\delta^2)
\end{pmatrix}
}\right) \\
&= \trace\left(\sqrt{\begin{pmatrix}
A  & 0 \\
0  & \delta^2 (C - B^\dagger A^{-1} B)
\end{pmatrix}
}\right) + o(\delta),
\end{split}
\end{equation}
where in the last step, we used \lemmaref{lemma:dist}. 
\end{proof}

To prove \thmref{thm:main}, we consider the derivative along a direction $\vy$. For any second order differentiable function $\rho_\vx$, we have the Taylor expansion of $\rho_{\vx+\epsilon \vy}$ equal to 
\begin{equation}
%\begin{split}
\rho_{\vx + \epsilon \vy} 
= \rho_{\vx} + \epsilon \sum_{i=1}^P (\partial_i \rho_{\vx}) y_i + \frac{\epsilon^2}{2}\sum_{i,j=1}^P (\partial_i\partial_j \rho_{\vx}) y_i y_j + o(\epsilon^2),
%\end{split}
\end{equation}
in a neighbourhood of $\epsilon = 0$. Using the simplified notation $\rho(\epsilon):=\rho_{\vx + \epsilon \vy}$, $\rho_\vx = \Lambda$, $R := \sum_{i=1}^P (\partial_i \rho_{\vx}) y_i$ and $S:= \frac{1}{2}\sum_{i,j=1}^P (\partial_i\partial_j \rho_{\vx}) y_i y_j$. Our goal is to prove
\begin{equation}
\label{eq:main}
\trace(\sqrt{\sqrt{\rho(\epsilon)}\rho(-\epsilon)\sqrt{\rho(\epsilon)}}) = \trace(\Lambda) + \epsilon^2 \trace(S) - \epsilon^2 \sum_{k,\ell:\lambda_k+\lambda_\ell \neq 0} \frac{\abs{R_{k\ell}}^2}{\lambda_k+\lambda_\ell} + o(\epsilon^2).
\end{equation}
If \eqref{eq:main} holds, we will have 
\begin{equation}
\begin{split}
\trace(\sqrt{\sqrt{\rho(\epsilon)}\rho(-\epsilon)\sqrt{\rho(\epsilon)}}) 
&= 1 - \epsilon^2 \sum_{k,\ell:\lambda_k+\lambda_\ell \neq 0} \frac{\big| \sum_{i=1}^P \bra{k}\partial_i \rho_{\vx}\ket{\ell} y_i \big|^2}{\lambda_k+\lambda_\ell}\\
&= 1 - \epsilon^2 \sum_{i,j=1}^P y_i y_j \sum_{k,\ell:\lambda_k+\lambda_\ell \neq 0}  \frac{\bra{k}\partial_i \rho_{\vx}\ket{\ell}\bra{\ell}\partial_j \rho_{\vx}\ket{k}}{\lambda_k + \lambda_\ell}\\
&= 1 - \frac{\epsilon^2}{2} \sum_{i,j=1}^P y_i y_j F^{ij}(\rho_\vx),
\end{split}
\end{equation}
which implies \thmref{thm:main}. Therefore, in order to prove \thmref{thm:main}, it will be sufficient to prove the following slightly more general theorem:
% (which could be applied in e.g. Ref.~\cite{zhou2019optimal}):
\begin{theorem}
Let $\rho(\epsilon)$ be a positive semidefinite matrix equal to 
\begin{equation}
\rho(\epsilon) = \Lambda + \epsilon R + \epsilon^2 S + o(\epsilon^2)
\end{equation}
in a neighbourhood of $\epsilon  = 0$, where $\Lambda$ is a diagonal matrix with diagonal elements $\{\lambda_k\}_{k=1}^d$. Then \eqref{eq:main} holds true. 
\end{theorem}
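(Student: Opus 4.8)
The plan is to pass to the block decomposition $\mathbb{C}^d=\mH_\is\oplus\mH_\inull$, where $\mH_\is=\mathrm{supp}(\Lambda)$ and $\mH_\inull=\ker(\Lambda)$, so that $\Lambda=\Lambda_\is\oplus 0$ with $\Lambda_\is\succ0$, and to write $R$, $S$, $\rho(\epsilon)$ and $M:=\sqrt{\rho(\epsilon)}\,\rho(-\epsilon)\,\sqrt{\rho(\epsilon)}$ as $2\times2$ block matrices accordingly. Positivity of $\rho(\pm\epsilon)$ near $\epsilon=0$ does the structural work: $\bra{v}\rho(\pm\epsilon)\ket{v}\ge0$ for $v\in\mH_\inull$ forces $R_{\inull\inull}=0$, and the Schur-complement form of $\rho(\epsilon)\succeq0$ forces $\tilde T:=S_{\inull\inull}-R_{\is\inull}^\dagger\Lambda_\is^{-1}R_{\is\inull}\succeq0$. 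The target \eqref{eq:main} is just the second-order expansion of $\trace\sqrt M=\sum_k\sqrt{\mu_k}$, and I would split the eigenvalues $\mu_k$ of $M$ into the $|\mH_\is|$ ``large'' ones (staying near $\lambda_k^2>0$) and the $|\mH_\inull|$ ``small'' ones (tending to $0$).

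Working out the blocks of $\sqrt{\rho(\pm\epsilon)}$ — its $\is\is$-block is $\sqrt{\Lambda_\is}+O(\epsilon)$ and, since $\Lambda_\is\succ0$, is $o(\epsilon^2)$-determined by $(\Lambda,R,S)$; its off-diagonal blocks are $O(\epsilon)$; its $\inull\inull$-block is $\epsilon\sqrt{\tilde T}+o(\epsilon)$ by the argument in the proof of \lemmaref{lemma:singular} — one gets $M=\bigl(\begin{smallmatrix}\mathcal A&\epsilon\mathcal B\\ \epsilon\mathcal B^\dagger&\epsilon^2\mathcal C\end{smallmatrix}\bigr)+o(\epsilon^2)$ with $\mathcal A\succ0$ a second-order perturbation of $\Lambda_\is^2$. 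Conjugating $M$ by a unitary of the type built in the proof of \lemmaref{lemma:singular} block-diagonalizes it, reducing $\trace\sqrt M$ to $\trace\sqrt{\mathcal A'}+\trace\sqrt D+o(\epsilon^2)$; here $\mathcal A'$ is a positive-definite second-order perturbation of $\Lambda_\is^2$ carrying the large eigenvalues, $D=O(\epsilon^4)$ is the effective operator for the small ones, and the cross-block remainder is absorbed into $o(\epsilon^2)$ via \lemmaref{lemma:dist} together with the wide separation of the $O(1)$ and $O(\epsilon^4)$ spectra. The first term is what \lemmaref{lemma:taylor} is built for: expanding $\sqrt{\cdot}$ around $\Lambda_\is^2$ to second order gives
\[
\trace\sqrt{\mathcal A'}=\trace\Lambda_\is+\epsilon^2\Bigl(\trace S_{\is\is}-\!\!\sum_{k,\ell\in\is}\!\!\tfrac{|R_{k\ell}|^2}{\lambda_k+\lambda_\ell}-\trace\bigl(R_{\is\inull}^\dagger\Lambda_\is^{-1}R_{\is\inull}\bigr)\Bigr)+o(\epsilon^2),
\]
the usual support-QFI term plus a ``leakage'' correction. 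Granting the kernel claim $\trace\sqrt D=\epsilon^2\trace\tilde T+o(\epsilon^2)$, the identities $\trace S_{\is\is}+\trace\tilde T=\trace S-\trace(R_{\is\inull}^\dagger\Lambda_\is^{-1}R_{\is\inull})$ and $\sum_{k,\ell\in\is}\tfrac{|R_{k\ell}|^2}{\lambda_k+\lambda_\ell}+2\trace(R_{\is\inull}^\dagger\Lambda_\is^{-1}R_{\is\inull})=\sum_{k,\ell:\lambda_k+\lambda_\ell\ne0}\tfrac{|R_{k\ell}|^2}{\lambda_k+\lambda_\ell}$ combine the two pieces into exactly \eqref{eq:main}.

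The step I expect to be the real obstacle is the kernel claim $\trace\sqrt D=\epsilon^2\trace\tilde T+o(\epsilon^2)$: the difficulty is that $\sqrt{\cdot}$ is non-smooth at the kernel, so the $\inull\inull$-block of $\sqrt{\rho(\epsilon)}$ is only $o(\epsilon)$-determined by $(\Lambda,R,S)$ and a naive expansion of $M_{\inull\inull}$ is useless, while $D$ is of order $\epsilon^4$ and crude error bounds are far larger. The way around this is the block-inverse identity $D^{-1}=[M^{-1}]_{\inull\inull}$ with $M^{-1}=\sqrt{\rho(\epsilon)}^{-1}\rho(-\epsilon)^{-1}\sqrt{\rho(\epsilon)}^{-1}$: the Schur complements of $\sqrt{\rho(\pm\epsilon)}$ and of $\rho(-\epsilon)$ onto $\mH_\inull$ are $\epsilon\sqrt{\tilde T}+o(\epsilon)$ and $\epsilon^2\tilde T+o(\epsilon^2)$, so $[M^{-1}]_{\inull\inull}=\epsilon^{-4}\tilde T^{-2}+o(\epsilon^{-4})$ and hence $D=\epsilon^4\tilde T^2+o(\epsilon^4)$ — the imprecisions being harmless because, $\tilde T$ being positive, they enter only as $o(1)$ corrections to $\tilde T$. (Equivalently, $M$ being isospectral with $\rho(\epsilon)\rho(-\epsilon)=\Lambda^2+\epsilon[R,\Lambda]+\epsilon^2(\Lambda S+S\Lambda-R^2)+o(\epsilon^2)$, one checks that the $O(1)$, $O(\epsilon)$ and $O(\epsilon^2)$ parts of its Feshbach reduction onto $\mH_\inull$ all cancel by the positivity-forced block structure — for instance the $O(\epsilon^2)$ part is $-R_{\is\inull}^\dagger R_{\is\inull}-(-R_{\is\inull}^\dagger R_{\is\inull})=0$ — leaving $\epsilon^4\tilde T^2$.) The degenerate case of singular $\tilde T$ is handled by a regularization $S\mapsto S+\eta\id$ with $\eta\to0^+$, the affected eigenvalues of $D$ being $o(\epsilon^4)$ and contributing nothing to $\trace\sqrt D$. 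Finally $\trace\sqrt{\epsilon^4\tilde T^2+o(\epsilon^4)}=\epsilon^2\trace\tilde T+o(\epsilon^2)$ by \lemmaref{lemma:dist}, exactly as at the end of the proof of \lemmaref{lemma:singular}, closing the argument.
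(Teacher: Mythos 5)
Your overall skeleton is the same as the paper's: the same support/kernel block split, the same positivity consequences $R_{\inull\inull}=0$ and $\tilde T:=S_{\inull\inull}-R_{\is\inull}^\dagger\Lambda_\is^{-1}R_{\is\inull}\succeq 0$, the same second-order formula for the support block (your display for $\trace\sqrt{\mathcal A'}$, leakage term included, is exactly what the paper gets by conjugating $\rho(\pm\epsilon)$ with $e^{\mp i\epsilon G}$, $G_{12}=-i\Lambda_+^{-1}R_{12}$, and then using \lemmaref{lemma:taylor}), and the same final bookkeeping. What you replace is the kernel-block computation, and that replacement has real gaps. The splitting $\trace\sqrt M=\trace\sqrt{\mathcal A'}+\trace\sqrt D+o(\epsilon^2)$ is not delivered by \lemmaref{lemma:singular} ``as built'': applied to $M$, whose off-diagonal blocks are $O(\epsilon)$, that lemma runs at $\delta=\epsilon$ and gives only $o(\epsilon)$ accuracy. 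To reach $o(\epsilon^2)$ the residual coupling must be pushed below $o(\epsilon^2)$, because a coupling $E$ can move the $O(\epsilon^4)$ eigenvalues by $O(\norm{E}^2)$ and $\sqrt{\cdot}$ is only H\"older-$1/2$ at zero; this is precisely why the paper first rotates $\rho(\pm\epsilon)$ by $e^{\mp i\epsilon G}$ and only then invokes \lemmaref{lemma:singular} at scale $\delta=\epsilon^2$. Moreover, the corner block produced by a unitary block-diagonalization is \emph{not} the Schur complement, while your identity $D^{-1}=[M^{-1}]_{\inull\inull}$ holds only for the latter; the two agree only up to relative $O(\epsilon)$ corrections, which happens to be harmless for $\trace\sqrt D$ but needs an argument (e.g.\ a congruence/Ostrowski bound exploiting the $O(1)$-versus-$O(\epsilon^4)$ spectral separation) that you do not give. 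Likewise, your block structure for $\sqrt{\rho(\pm\epsilon)}$ is asserted: \lemmaref{lemma:taylor} does not apply to $\rho(\epsilon)$ (not positive definite), and the proof of \lemmaref{lemma:singular} controls traces, not blocks of square roots. When everything is invertible your Schur computation does reproduce $D=\epsilon^4\tilde T^2+o(\epsilon^4)$, consistent with the cancellation $C-B^\dagger A^{-1}B=T_{22}^2$ that the paper obtains inside \lemmaref{lemma:singular}, so the device is sound in spirit there.

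The more serious gap is invertibility. Your central identity needs $M^{-1}$, i.e.\ $\rho(\pm\epsilon)\succ0$, equivalently $\tilde T\succ0$; but $\tilde T$ is singular in the generic constant-rank situation, in particular for \emph{every} pure-state family (for $\ket{\psi_\epsilon}=\cos\epsilon\ket{0}+\sin\epsilon\ket{1}$ one finds $S_{\inull\inull}=R_{\inull\is}\Lambda_\is^{-1}R_{\is\inull}=1$, so $\tilde T=0$ and $M$ has rank one), which is the case the theorem is chiefly about. The regularization $S\mapsto S+\eta\id$ does not repair this as stated: it changes both sides of \eqref{eq:main}, and the resulting error term is $o_\eta(\epsilon^2)$ with no uniformity in $\eta$, so the limits $\epsilon\to0$ and $\eta\to0^+$ cannot simply be exchanged; monotonicity of the fidelity under $\rho\mapsto\rho+\eta\epsilon^2\id$ gives one of the two required inequalities in the limit but not the other, and the assertion that ``the affected eigenvalues of $D$ are $o(\epsilon^4)$'' is exactly what would have to be proved and cannot come from the inverse identity, which is unavailable precisely there. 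The paper's proof avoids all of this because it never inverts $\tilde T$: only $\Lambda_+$ (and $A\approx\Lambda_+^2$) are inverted, and \lemmaref{lemma:singular} requires merely $C-B^\dagger A^{-1}B\succeq0$, so singular $\tilde T$ is handled on the same footing. Unless you can establish your kernel claim by an argument that survives $\ker\tilde T\neq\{0\}$, the proposal does not prove the theorem in its most important cases.
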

\begin{proof}
Choose a proper order of basis such that $\Lambda = \begin{pmatrix}
\Lambda_+ & 0 \\
0 & 0 
\end{pmatrix}$ where $\Lambda_+$ is a positive diagonal matrix. Let $R = \begin{pmatrix}
R_{11} & R_{12} \\
R_{21} & R_{22} 
\end{pmatrix}$ and  $S = \begin{pmatrix}
S_{11} & S_{12} \\
S_{21} & S_{22} 
\end{pmatrix}$. Due to the positivity of $\rho(\epsilon)$ around $\epsilon = 0$, we must have $R_{22} = 0$ and $S_{22} - R_{21} \Lambda_+^{-1} R_{12} \succeq 0$. 

Now we choose a Hermitian matrix $G = \begin{pmatrix}
0 & G_{12} \\
G_{21} & 0 \\
\end{pmatrix}$ such that 
\begin{equation}
\begin{split}
\rho(\epsilon) 
&= \Lambda + \epsilon R + \epsilon^2 S + o(\epsilon^2) 
% = 
% \begin{pmatrix}
% \Lambda_+ & 0\\
% 0 & 0 
% \end{pmatrix}
% + \epsilon
% \begin{pmatrix}
% R_{11} & R_{12}\\
% R_{21} & 0 
% \end{pmatrix}
% + \epsilon^2
% \begin{pmatrix}
% S_{11} & S_{12}\\
% S_{21} & S_{22} 
% \end{pmatrix}\\
= e^{-i\epsilon G} \Lambda(\epsilon) e^{i\epsilon G},
\end{split}
\end{equation}
where 
\begin{equation}
\Lambda(\epsilon) = \begin{pmatrix}
O(1) & O(\epsilon^2) \\
O(\epsilon^2) & O(\epsilon^2)
\end{pmatrix}.
\end{equation}
A simple calculation shows that $G_{12} = - i \Lambda_+^{-1}R_{12}, G_{21} = i R_{21} \Lambda_+^{-1}$ is a proper choice, leading to  
\begin{equation}
\Lambda(\epsilon) = 
\begin{pmatrix}
\Lambda_{+} + \epsilon R_{11} + \epsilon^2 T_{11} & \epsilon^2 T_{12} \\
\epsilon^2 T_{21} &  \epsilon^2 T_{22} 
\end{pmatrix}
+ \begin{pmatrix}
o(\epsilon^2) & o(\epsilon^2)\\
o(\epsilon^2) & o(\epsilon^2)
\end{pmatrix},
\end{equation}
where
$T_{11} = S_{11} + \frac{1}{2}(G_{12}G_{21}\Lambda_+ + \Lambda_+ G_{12}G_{21})$ and $T_{22} = S_{22} - R_{21} \Lambda_+^{-1} R_{12}$. The values of $T_{12}$ and $T_{21}$ will not affect the results. 
It is easy to verify that 
\begin{equation}
\sqrt{\Lambda(\epsilon)} = 
\begin{pmatrix}
\sqrt{\Lambda_{+} + \epsilon R_{11} + \epsilon^2 T_{11}} & \epsilon^2 \Lambda_+^{-1/2} T_{12} \\
\epsilon^2 T_{21} \Lambda_+^{-1/2} &  \abs{\epsilon} \sqrt{T_{22}} 
\end{pmatrix}
+ \begin{pmatrix}
o(\epsilon^2) & o(\epsilon^2)\\
o(\epsilon^2) & o(\epsilon)
\end{pmatrix},
\end{equation}
then 
\begin{equation}
\begin{split}
&\quad~~
e^{i \epsilon G}\sqrt{\rho(\epsilon)} \rho(-\epsilon) \sqrt{\rho(\epsilon)} e^{-i\epsilon G}\\
&= 
\sqrt{\Lambda(\epsilon)} 
\Big(I + 2 i\epsilon G - 2 \epsilon^2 G^2\Big) 
\Lambda(-\epsilon) 
\Big(I - 2 i\epsilon G - 2 \epsilon^2 G^2\Big) 
\sqrt{\Lambda(\epsilon)}
 + \begin{pmatrix}
o(\epsilon^2) & o(\epsilon^2) \\
o(\epsilon^2) & o(\epsilon^4)
\end{pmatrix}  \\
&= \sqrt{\Lambda(\epsilon)} \Lambda(-\epsilon) \sqrt{\Lambda(\epsilon)} +2i\epsilon  \sqrt{\Lambda(\epsilon)} (G \Lambda(-\epsilon) - \Lambda(-\epsilon) G ) \sqrt{\Lambda(\epsilon)}  \\
&\qquad\; - 2\epsilon^2 \sqrt{\Lambda(\epsilon)} ( G^2 \Lambda(-\epsilon) + \Lambda(-\epsilon) G^2 - 2G \Lambda(-\epsilon) G)  \sqrt{\Lambda(\epsilon)}  + \begin{pmatrix}
o(\epsilon^2) & o(\epsilon^2) \\
o(\epsilon^2) & o(\epsilon^4)
\end{pmatrix}.
\end{split}
\end{equation}
Let $\epsilon \geq 0$, we calculate each term in detail 
(the $\begin{pmatrix}
o(\epsilon^2) & o(\epsilon^2) \\
o(\epsilon^2) & o(\epsilon^4)
\end{pmatrix}$ term is omitted in Term (1) (2) and (3) for simplicity): 

\noindent Term (1):
\begin{equation}
\begin{split}
%&\quad~~
\sqrt{\Lambda(\epsilon)} \Lambda(-\epsilon) \sqrt{\Lambda(\epsilon)}
% &= \begin{pmatrix}
% \sqrt{\Lambda_{+} + \epsilon R_{11} + \epsilon^2 T_{11}} & \epsilon^2 \Lambda_+^{-1/2} T_{12} \\
% \epsilon^2 T_{21} \Lambda_+^{-1/2} &  \abs{\epsilon} \sqrt{T_{22}} 
% \end{pmatrix}
% \begin{pmatrix}
% \Lambda_{+} - \epsilon R_{11} + \epsilon^2 T_{11} & \epsilon^2 T_{12} \\
% \epsilon^2 T_{21} &  \epsilon^2 T_{22} 
% \end{pmatrix}
% \begin{pmatrix}
% \sqrt{\Lambda_{+} + \epsilon R_{11} + \epsilon^2 T_{11}} & \epsilon^2 \Lambda_+^{-1/2} T_{12} \\
% \epsilon^2 T_{21} \Lambda_+^{-1/2} &  \abs{\epsilon} \sqrt{T_{22}} 
% \end{pmatrix}\\
&\approx \begin{pmatrix}
\sqrt{\Lambda_{+} + \epsilon R_{11} + \epsilon^2 T_{11}}(\Lambda_{+} - \epsilon R_{11} + \epsilon^2 T_{11})\sqrt{\Lambda_{+} + \epsilon R_{11} + \epsilon^2 T_{11}}& \epsilon^2 \Lambda_+ T_{12}\\
\epsilon^2 T_{21} \Lambda_+ & \epsilon^4 (T_{22}^2 + T_{21}T_{12})\\
\end{pmatrix},
%\end{multline*}
\end{split}
\end{equation}
Term (2):
\begin{equation}
2i\epsilon  \sqrt{\Lambda(\epsilon)} (G \Lambda(-\epsilon) - \Lambda(-\epsilon) G ) \sqrt{\Lambda(\epsilon)} \approx 2i\epsilon^2
\begin{pmatrix}
0 & - \Lambda_+^{3/2} G_{12} \sqrt{T_{22}}  \\
\sqrt{T_{22}} G_{21} \Lambda_+^{3/2} & ( \sqrt{T_{22}} G_{21} \Lambda_+^{1/2} T_{12} -  T_{21} \Lambda_+^{1/2} G_{12}\sqrt{T_{22}} ) \epsilon^2 \\
\end{pmatrix},
\end{equation}
Term (3):
\begin{multline}
- 2\epsilon^2 \sqrt{\Lambda(\epsilon)} ( G^2 \Lambda(-\epsilon) + \Lambda(-\epsilon) G^2 - 2G \Lambda(-\epsilon) G)  \sqrt{\Lambda(\epsilon)}  \approx \\ -2\epsilon^2 
\begin{pmatrix}
\sqrt{\Lambda_+} G_{12}G_{21} \Lambda_+^{3/2} + \Lambda_+^{3/2} G_{12}G_{21} \sqrt{\Lambda_+} & 0 \\
0 & - 2 \sqrt{T_{22}} G_{21} \Lambda_+ G_{12} \sqrt{T_{22}} \epsilon^2 \\
\end{pmatrix}.
\end{multline}
Then using \lemmaref{lemma:singular} (taking $\delta = \epsilon^2$) and \lemmaref{lemma:taylor}, we have 
\begin{equation}
\begin{split}
&\quad~\, \trace\Big(\sqrt{\sqrt{\rho(\epsilon)} \rho(-\epsilon) \sqrt{\rho(\epsilon)}}\Big) = \trace\Big( e^{-i\epsilon G}\sqrt{\sqrt{\rho(\epsilon)} \rho(-\epsilon) \sqrt{\rho(\epsilon)}} e^{i\epsilon G}\Big)\\
&= \trace\Big(\big( \sqrt{\Lambda_{+} + \epsilon R_{11} + \epsilon^2 T_{11}}(\Lambda_{+} - \epsilon R_{11} + \epsilon^2 T_{11})\sqrt{\Lambda_{+} + \epsilon R_{11} + \epsilon^2 T_{11}} \\
&\qquad \qquad \qquad \qquad \qquad - 2\epsilon^2(\sqrt{\Lambda_+} G_{12}G_{21} \Lambda_+^{3/2} + \Lambda_+^{3/2} G_{12}G_{21} \sqrt{\Lambda_+})\big)^{1/2}\Big)
+ \epsilon^2 \trace(T_{22}) + o(\epsilon^2)\\
&= \trace(\sqrt{\Lambda_+}) + \epsilon^2 \trace(T_{11}) - \epsilon^2 \sum_{i,j:\lambda_i,\lambda_j > 0} \frac{\abs{R_{ij}}^2}{\lambda_i+\lambda_j} - 2 \epsilon^2 \trace(G_{12}G_{21}\Lambda_+) + \epsilon^2 \trace(S_{22} - R_{21} \Lambda_+^{-1} R_{12}) + o(\epsilon^2)\\
&= \trace(\sqrt{\Lambda_+}) + \epsilon^2 \trace(S) - \epsilon^2 \sum_{i,j:\lambda_i > 0,\lambda_j > 0} \frac{\abs{R_{ij}}^2}{\lambda_i+\lambda_j} - 2\epsilon^2 \trace(R_{21} \Lambda_+^{-1} R_{12}) + o(\epsilon^2)\\
&= \trace(\sqrt{\Lambda}) + \epsilon^2 \trace(S) - \epsilon^2 \sum_{i,j:\lambda_i + \lambda_j > 0} \frac{\abs{R_{ij}}^2}{\lambda_i+\lambda_j} + o(\epsilon^2).
\end{split}
\end{equation}

\end{proof}

\section{Conclusions}

We put forward a new definition of the Bures metric which is fully compatible with the QFI, as opposed to the previous one where discrepancy exists in some singular points. It also provides a more reliable approach to calculate the QFI numerically using the Bures metric. 

\section{Acknowledgements}

We thank Kyungjoo Noh, Rafa{\l} Demkowicz-Dobrza\'{n}ski, Zhou Fan, Jing Yang, Yuxiang Yang for helpful discussions. We acknowledge support from the ARL-CDQI (W911NF15-2-0067, W911NF-18-2-0237), ARO (W911NF-18-1-0020, W911NF-18-1-0212), ARO MURI (W911NF-16-
1-0349), AFOSR MURI (FA9550-15-1-0015), DOE (DE-SC0019406), NSF (EFMA-1640959), and the Packard Foundation (2013-39273).

\bibliographystyle{aps}
%\bibliography{refs-bures}{}

\end{document}